
\documentclass[english,11pt]{elsarticle}
\usepackage{graphicx}
\usepackage{epstopdf}
\usepackage{ae,aecompl}

\usepackage[T1]{fontenc}
\usepackage[latin9]{inputenc}
\usepackage[letterpaper]{geometry}
\usepackage{setspace}
\doublespacing

\setcounter{errorcontextlines}{999}
\usepackage{amsmath}
\usepackage{amsthm}
\usepackage{amssymb}
\usepackage{bm}       
\usepackage{bbm}
\usepackage{dcolumn}  
\usepackage{psfrag}
\usepackage{pstricks}
\usepackage{pst-node}

\makeatletter

\newcommand{\maximize}{\mathop{\operator@font{maximize}}}
\newcommand{\minimize}{\mathop{\operator@font{minimize}}}


\newtheorem{thm}{Theorem}
\newcounter{definition}

\makeatother

\begin{document}
 
\title{Mass conserved elementary kinetics is sufficient for the existence
of a non-equilibrium steady state concentration}

\author{R. M. T. Fleming
     \\ Center for Systems Biology,
        University of Iceland
     \\ Ph: +354 618 6245, Email: ronan.mt.fleming@gmail.com
     \\ Sturlugata 8, Reykjavik 101, Iceland.
     \\[8pt] I. Thiele
     \\ Faculty of Industrial Engineering, Mechanical Engineering \& Computer Science,
     \\ Sturlugata 8, Reykjavik 101, Iceland.
     \\ University of Iceland}

\date{\today}

\begin{abstract}
Living systems are forced away from thermodynamic equilibrium by exchange of mass and energy with their environment. In order to model a biochemical reaction network in a non-equilibrium state one requires a mathematical formulation to mimic this forcing. We provide a general formulation to force an arbitrary large kinetic model in a manner that is still consistent with the existence of a non-equilibrium steady state. We can guarantee the existence of a non-equilibrium steady state assuming only two conditions; that every reaction is mass balanced and that continuous kinetic reaction rate laws never lead to a negative molecule concentration. These conditions can be verified in polynomial time and are flexible enough to permit one to force a system away from equilibrium. In an expository biochemical example we show how a reversible, mass balanced perpetual reaction, with thermodynamically infeasible kinetic parameters, can be used to perpetually force a kinetic model of anaerobic glycolysis in a manner consistent with the existence
of a steady state. Easily testable existence conditions are foundational for efforts to reliably compute non-equilibrium steady states in genome-scale biochemical kinetic models.
  \smallskip

  Keywords: thermodynamic forcing, chemical reaction network, stoichiometric consistency.
\end{abstract}

\maketitle

\section{Introduction}

There are various approaches for simulation of biochemical network
function \citep{klipp2009systems}. In principle, an ideal approach
would accurately represent known physicochemical principles of reaction
kinetics, tailored with kinetic parameters specific to a particular
organism. However, when modeling genome-scale biochemical networks,
one's choice of modeling approach is also shaped by concerns of computational
tractability. One of the main reasons that flux balance analysis \citep{Watson1986,Fell1986,Savinell1992c,Pal06,orth2010flux}
has found widespread applications in genome-scale modeling is that
the underlying algorithm is typically based on linear optimization.
In general, industrial quality software implementations of linear
optimization algorithms are guaranteed to find an optimal solution,
if one exists, or otherwise give a certificate that the problem posed
is infeasible. In the process of iterative model development, one
may use flux balance analysis to test if a stoichiometric model, obtained
from a draft reconstruction, actually admits a steady state flux (reaction
rate) \citep{thieleTests}. If not, various algorithms have been developed
that help to detect missing reactions \citep{thieleTests} or \emph{stoichiometric
inconsistencies} \citep{gevorgyan2008detection} (discussed further
in Section \ref{sub:Stoichiometry}).

Flux balance analysis predicts fluxes that satisfy steady state mass
conservation but not necessarily energy conservation or the second
law of thermodynamics \citep{beard2002eba,looplaw,fleming2008stk}.
Whilst the set of steady state mass conserved fluxes includes those
that are thermodynamically feasible, additional constraints are required
in order to guarantee a flux that additionally satisfies energy conservation
and the second law of thermodynamics \citep{fleming2009opi}. Within
the set of steady state fluxes that satisfy mass conservation, energy
conservation and the second law of thermodynamics, there is a subset
that additionally satisfy various reaction kinetic rate laws \citep{cook2007eka}.
The satisfaction of such kinetic constraints is important for accurate
representation of various biochemical processes where the abundance
of a molecule affects the rate of a reaction, e.g., allosteric regulation
\citep{heinrich1996rcs,Jamshidi2008a}.

There are various algorithmic barriers to genome-scale kinetic modeling
that preclude satisfaction of all of the aforementioned thermodynamic
and kinetic constraints, without resorting to rate law approximation.
Apart from the open algorithmic challenge to develop an algorithm
for efficient computation of thermodynamically and kinetically feasible
steady states in genome-scale kinetic models, there is also the challenge
of mathematically expressing the necessary and sufficient conditions
for existence of such a steady state in a manner that can be efficiently
tested given a genome-scale model.

The quantitative study of chemical reaction kinetics has a long history,
beginning perhaps with Ludwig Wilhemy's 1850 discovery that the rate
of a chemical reaction is proportional to the concentrations of consumed
substrates \citep{wilhelmy1850Kinetics}. This fundamental law of
chemical kinetics is well known to chemists and biochemists alike.
However, due to imperfect inheritance of knowledge, there are other
useful facts, established generations ago, that are slipping from
the consciousness of chemists and biochemists alike - as measured by
contemporary citations. Even for a multifaceted paper with many citations,
those citations can be due to a historically facet, not necessarily
the most useful facet in a contemporary sense. A case in point is
the 1962 paper by James Wei entitled ``Axiomatic treatment of chemical
reaction systems''  \citep{wei1962axiomatic}. This excellent paper
has been cited 65 times, but only twice in the last 10 years and infrequently
by theoretical biochemists. The vast majority of citations refer to
Wei's treatment of the stability of chemical reaction systems with
Lyapunov functions. Exceptionally, the importance of Wei's result,
concerning the conditions for existence of non-equlibrium steady states,
is realized, e.g., in 1976 Perelson used Wei's result to illustrate
the danger inherent in concluding results from mathematical models
of systems of chemical reactions that do not conserve mass \citep{Perelson1976}.

Herein we build upon Wei's work that establishes sufficient conditions
for the existence of at least one steady state reaction flux and molecule
concentration for a broad class of chemical kinetic models. This class
of kinetic model includes all those networks with exclusively mass
balanced chemical reactions, where the kinetics rate laws are such
that molecule concentration can never be a negative quantity. This
class of reaction network includes all biochemical reaction networks.
The conditions for existence may be easily tested by a trivial check
on the kinetic rate law formulation for each reaction, together with
a test of stoichiometric consistency using linear optimization \citep{gevorgyan2008detection}. 

In Section \ref{sec:Chemical-reaction-networks} we introduce some
mathematical definitions of pertinent chemical reaction network concepts.
Section \ref{sec:ExistenceSS} states and proves theorem concerning
the existence of a non-negative steady state molecule concentration
(vector) for mass conserved elementary reaction kinetics. This theorem
and proof follow the more general case outlined in broad strokes by
Wei \citep{wei1962axiomatic}. Based on citation history, Wei's result
seems to be lost upon the biochemical modeling community. In Section
\ref{sec:UtilitySSexistence}, we illustrate for the first time, the
utility of Wei's existence theorem for modeling non-equlibrium steady
states with various examples, including a kinetic model of anaerobic
glycolysis in \emph{Trypanosoma brucei}. Finally we summarise and
attempt to place this work in the context of established mathematical
approaches to model biochemical reaction networks.

\section{Chemical reaction networks}

\label{sec:Chemical-reaction-networks}

\subsection{\label{sub:Stoichiometry}Stoichiometry}

Consider a biochemical network with $m$ molecules and $n$ \emph{elementary}
reactions. An elementary reaction is one for which no reaction intermediates
have been detected or need to be postulated in order to describe the
chemical reaction on a molecular scale. It follows that the reaction
stoichiometry is sufficient to define the molecularity of the molecules
involved in the reaction. One may combine elementary reactions together
to form a \emph{composite} reaction. One can define the topology of
the resulting \emph{hypergraph} using a generalized incidence matrix,
$\mathbf{S}\in\mathbb{\mathbb{Z}}^{m,n}$, where $\mathbf{S}$ is
always singular and typically $r\equiv\textrm{rank}(\mathbf{S})<m<n$
for large biochemical networks. Each row in this \emph{stoichiometric
matrix} represents a particular molecule, e.g., glucose, whilst each
column represents a reversible biochemical reaction. We assume that
all biochemical reactions are indeed \emph{reversible} \citep{lewis1925new}.
For each reversible reaction, convention dictates one direction be
designated \emph{forward} and the other \emph{reverse}. With respect
to the forward direction, for all $i=1\ldots m$ and $j=1\ldots n$,
$S_{ij}<0$ if molecule $i$ is a \emph{substrate} in a reaction,
meaning that it is consumed by the reaction $j$, $S_{ij}>0$ if molecule
$i$ is a \emph{product}, meaning that it is produced by a reaction,
and $S_{ij}=0$ otherwise. Typically stoichiometric coefficients are
integers reflecting the whole number molecularity for a molecule consumed
or produced in a reaction.

Each column of a stoichiometric matrix contains at least one negative
coefficient and one positive coefficient, reflecting either the chemical
conversion of one molecule to another, or in multi-compartmental models,
the transport of a molecule from one compartment to another, i.e.,
a transport reaction may consume one molecule in a reactant compartment
and produces one molecule in a different product compartment, even
if the molecule is physically identical. We assume that each column
of $\mathbf{S}$ corresponds to one \emph{mass conserving} chemical
reaction. A necessary, but insufficient condition for mass balancing
is that each column of $\mathbf{S}$ must have at least one positive
coefficient and at least one negative coefficient. We say that a chemical
reaction is $linear$ when the corresponding column of $\mathbf{S}$
contains two nonzero coefficients, $\{-1,1\}$. We say that a chemical
reaction is $bilinear$ when the corresponding column of $\mathbf{S}$
contains three non-negative coefficients, $\{-1,1,1\}$ or $\{-1,-1,1\}$.
There may be more than one negative (positive) coefficient in a column
when a reaction involves more than one substrate (product). In reaction
networks with \emph{composite} reactions, nonzero stoichiometric coefficients
are typically not of magnitude one. However, even the most complicated
composite reaction can be decomposed into linear and bilinear reactions.

Each row of $\mathbf{S}$ contains at least one positive coefficient
and at least one negative coefficient, reflecting the requirement
for at least one reaction to produce and at least one reaction to
consume each molecule. A stoichiometric matrix for a chemical reaction
network is said to be \emph{consistent} if each molecule can be assigned
a single positive molecular mass, without violating mass conservation,
and \emph{inconsistent} otherwise \citep{gevorgyan2008detection,Famili2003}.
Mathematically, this translates to the existence of at least one strictly
positive vector, $\mathbf{m}\in\mathbb{R}_{>0}^{m}$, in the left
nullspace of a consistent stoichiometric matrix $\mathbf{S}^{T}\cdot\mathbf{m}=\mathbf{0}$.
Strictly, we could say that each row of $\mathbf{S}$ corresponds
to an isotopically distinct molecular entity in order that the corresponding
molecular mass be precisely defined, as two otherwise identical molecules
can have different molecular mass depending on their isotopic label
e.g. $^{13}\textrm{C}$ vs $^{14}\textrm{C}$ glucose. However, we
shall assume that reaction kinetic parameters are isotopomer invariant
so we need not be so strict.

\subsection{Reaction kinetics}

Perhaps the simplest reaction kinetic assumption is that a unidirectional
reaction rate is proportional to the product of the concentrations
of each substrate consumed \citep{wilhelmy1850Kinetics}. Let us define
forward and reverse stoichiometric matrices, $\mathbf{F},\mathbf{R}\in\mathbb{R}_{\ge0}^{m,n}$
respectively, where $\mathbf{F}_{ij}$ denotes the stoichiometry of
substrate $i$ in forward reaction $j$ and $\mathbf{R}_{ij}$ denotes
the stoichiometry of substrate $i$ in reverse reaction $j$. It follows
that the stoichiometric matrix is defined by $\mathbf{S}\equiv-\mathbf{F}+\mathbf{R}$.
It is possible for the same molecule to appear as both a substrate
and a product in the same unidirectional reaction, e.g., an auto-catalytic
reaction, so it is natural to define $\mathbf{S}$ in terms of $\mathbf{F}$
and $\mathbf{R}$, rather than the other way around. We we may now
express \emph{elementary kinetics} for forward and reverse reaction
rates, respectively $\mathbf{v}_{f},\mathbf{v}_{r}\in\mathbb{R}^{n}$,
as 
\begin{equation}
\begin{array}{cc}
\mathbf{v}_{f}(\mathbf{k}_{f},\mathbf{x})\equiv\textrm{diag}(\mathbf{k}_{f})\cdot\exp(\mathbf{F}^{T}\cdot\ln(\mathbf{x})),\\
\mathbf{v}_{r}(\mathbf{k}_{r},\mathbf{x})\equiv\textrm{diag}(\mathbf{k}_{r})\cdot\exp(\mathbf{R}^{T}\cdot\ln(\mathbf{x})),
\end{array}\label{eq:massActionKinetics}
\end{equation}
 where we assume non-negative \emph{elementary kinetic parameters}
$\mathbf{k}_{f},\mathbf{k}_{r}\in\mathbb{R}_{\geq0}^{n}$, and the
exponential or natural logarithm of a vector is meant component-wise
\footnote{Strictly, it is not proper to take the logarithm of a unit that has
physical dimensions. This difficulty can be avoided by considering
$\mathbf{x}$ as a vector of mole fractions rather than concentrations
(Eq. 19.93 in \citep{physChemBRR}). %
}. We say that a pair of forward and reverse elementary kinetic parameters
are \emph{thermodynamically feasible} when they satisfy

\begin{equation}
\exp\left(-\mathbf{S}_{j}^{T}\cdot\frac{\mathbf{u}^{\circ}}{RT}\right)=\frac{k_{f,j}}{k_{r,j}},\label{eq:thermoFeasibleParam}
\end{equation}
 where $\mathbf{u}^{o}\in\mathbb{R}^{m}$ is a vector of standard
chemical potentials, $R$ is the gas constant and $T$ is temperature.
Thermodynamically feasible kinetic parameters for all reactions implies
detailed balance at thermodynamic equilibrium, i.e., $\mathbf{v}_{f}=\mathbf{v}_{r}$
\citep{tolman1979psm}.

We refer to \eqref{eq:massActionKinetics} as \emph{mass action kinetics}
\citep{prigogine1954chemical} only after we have stated our assumption
that \eqref{eq:thermoFeasibleParam} also holds for each mass conserving
reversible reaction. In the words of Horn \& Jackson \citep{horn1972gma}
``a kinetic description of chemical reactions in closed systems with
ideal mixtures, completely consistent with the requirements of stoichiometry
and thermodynamics, may be obtained by satisfying the following four
conditions'': 
\begin{itemize}
\item [{(a)}] The rate function of each elementary reaction is of the
mass action form. 
\item [{(b)}] The stoichiometric coefficients are such that mass is
conserved in each elementary reaction. 
\item [{(c)}] The kinetic constants in the rate functions are constrained
in such a way that the principle of detailed balancing is satisfied. 
\item [{(d)}] The stoichiometric coefficients are non-negative integers. 
\end{itemize}
Condition (a) is represented by \eqref{eq:massActionKinetics} and
conditon (b) is satisfied by strict adherence to elemental balancing
for each reaction during network reconstruction \citep{thieleTests,Thorleifsson2011}.
Horn \& Jackson \citep{horn1972gma} considered \emph{general mass
action kinetics} when only (a) was assumed to hold. We shall consider
\emph{mass conserved elementary kinetics} where (a) and (b) are assumed
to hold but (c) and (d) are allowed to be relaxed for a subset of
reactions. We shall return to this point in the discussion. We take
the dynamical equation for mass conserved elementary kinetics to be
\begin{eqnarray}
\dot{\mathbf{x}}\equiv\frac{d\mathbf{x}}{dt}=\mathbf{S}\cdot(\mathbf{K}_{f}\cdot\exp(\mathbf{F}^{T}\cdot\ln\left(\mathbf{x}\right))-\mathbf{K}_{r}\cdot\exp(\mathbf{R}^{T}\cdot\ln\left(\mathbf{x}\right)))\label{eq:massConservedElementaryKinetics}
\end{eqnarray}
 where $t$ denotes time, all reactions conserve mass, all reactions
are reversible, $\mathbf{K}_{f}=\textrm{diag}(\mathbf{k}_{f})$, $\mathbf{K}_{r}=\textrm{diag}(\mathbf{k}_{r})$
and $\mathbf{k}_{f},\mathbf{k}_{r}\in\mathbb{R}_{\geq0}^{m}$.

\subsection{Concentration non-negativity}
\begin{figure}[b]
\includegraphics[width=0.5\textwidth]{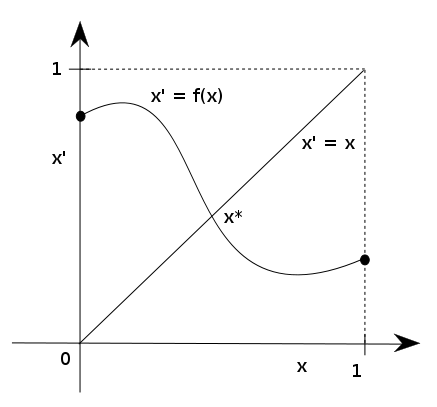}

\caption{\label{fig:Conceptual-illustration-of}Conceptual illustration of
the Brouwer fixed point theorem in one dimension. An arbitrary continuous
function $f$ is represented by the graph mapping the abscissa to
the ordinate. On the abscissa, the bounded interval between zero and
one represents the domain. The domain is closed as it includes its
endpoints and is convex as every line interval is a convex set. The
interval between zero and one on the ordinate represents the codomain
of $f$. The image of $f$ is a continuous interval contained within
the codomain of $f$, so $f$ is \emph{into.} It is unnecessary for
$f$ to be \emph{onto}. The diagonal line represents equality of the
values in the domain and codomain. Tracing a continuous curve from
left to right, between the dots, we see that it must intersect the
diagonal at some point. At the point where the diagonal meets the
curve, the value of the domain equals the value of the image $x^{*}=x'=f(x)$.
When the value passed into the function is the same as the value passed
out by the function, this value is termed a \emph{fixed point}. (Figure
adapted from http://commons.wikimedia.org/wiki/File:Fixedpoint1d.svg)}
\end{figure}

Physically, one would expect that molecular concentration be a non-negative
quantity. Starting from an initial non-negative concentration, $\mathbf{x}_{0}\geq0$,
it has been proven mathematically that all subsequent concentrations
are non-negative when the evolution of a system is subject to elementary
reaction kinetics \citep{bernstein1999nonnegativity,chellaboina2009modeling}.
Elementary kinetics is \emph{essentially non-negative} if, for all
$i=1\ldots m$, $\dot{x_{i}}\geq0$ for all $x_{i}\in\mathbb{R}_{\geq0}$,
where $\dot{x_{i}}$ and $x_{i}$ denote the $i^{th}$ component of
$\dot{\mathbf{x}}$ and $\mathbf{x}$, respectively. This mathematical
formulation can be easily be chemically interpreted. Suppose the concentration
of molecule A is zero; then irrespective of what the non-negative
concentration is for all other molecules, the rate of change in concentration
for molecule A is non-negative. To understand why, recall that in
elementary kinetics the rate of a unidirectional reaction is always
the product of a non-negative elementary kinetic parameter and the
concentration(s) of the substrate(s), each to the power of the absolute
value of the corresponding stoichiometric coefficient. If a molecule's
concentration is zero then all reactions consuming that molecule have
a rate of zero; hence, there can be no consumption of that molecule
and its rate of change in concentration is non-negative.

Consider the chemical reaction network with the single reaction
\[
A\overset{k_{f}}{\underset{k_{r}}{\rightleftharpoons}}B+C.
\]
 Let the forward and reverse reaction rates be given by elementary
kinetics $v_{f}=k_{f}a$ and $v_{r}=k_{r}bc$ with $k_{f},k_{r}\in\mathbb{R}_{\geq0}$,
where lowercase refers to concentration of the corresponding uppercase
molecule. If the initial concentrations of all molecules are non-negative
it is impossible for any molecule's concentration to go negative because
$a=0$ implies $v_{f}=0$ so no consumption of $A$ would occur and
therefore $a$ cannot be negative. Observe that the conditions for
concentration non-negativity place no constraints on the ratio of
forward over reverse kinetic parameter, but only that all kinetic
parameters be non-negative.

A further technical restriction is that $\dot{\mathbf{x}}$ be given
by a locally Lipschitz continuous function, but this is easily satisfied
by deterministic formulations of kinetics where the corresponding
differential equations are continuously differentiable (Lemma 1, \citep{chellaboina2009modeling}).
Starting from an initial non-negative concentration, equation \eqref{eq:massConservedElementaryKinetics}
constrains all subsequent concentrations to be non-negative as we
assume elementary reaction kinetics with non-negative kinetic parameters.
Even for non-negative but thermodynamically infeasible kinetic parameters,
which violate \eqref{eq:thermoFeasibleParam}, it is true that if
one begins with non-negative initial concentrations then all subsequent
concentrations remain non-negative.

\section{Existence of a steady state for mass conserved elementary kinetics}

\label{sec:ExistenceSS} The following theorem establishes sufficient
conditions for existence of a steady state concentration for a dynamical
system governed by mass conserved elementary kinetics.
\begin{thm}
\label{thm:existenceSS}Let the dynamical equation for mass conserved
elementary kinetics be \emph{
\begin{equation}
\dot{\mathbf{x}}\equiv\frac{d\mathbf{x}}{dt}=\mathbf{S}\cdot(\mathbf{K}_{f}\cdot\exp(\mathbf{F}^{T}\cdot\ln\left(\mathbf{x}\right))-\mathbf{K}_{r}\cdot\exp(\mathbf{R}^{T}\cdot\ln\left(\mathbf{x}\right)))\label{eq:elementaryKinetics}
\end{equation}
} where $\mathbf{x}\equiv\mathbf{x}(t)\in\mathbb{R}^{m}$ is molecule
concentration at time $t>0$, $\dot{\mathbf{x}}\in\mathbb{R}^{m}$
is the time derivative of concentration, $\mathbf{K}_{f}=\textrm{diag}(\mathbf{k}_{f})$,
$\mathbf{K}_{r}=\textrm{diag}(\mathbf{k}_{r})$ and $\mathbf{k}_{f},\mathbf{k}_{r}\in\mathbb{R}_{\geq0}^{n}$
are non-negative forward and reverse kinetic parameters. $\mathbf{F},\mathbf{R}\in\mathbb{R}_{\geq0}^{m,n}$
are forward and reverse stoichiometric matrices. $\mathbf{S}\equiv-\mathbf{F}+\mathbf{R}$
is a consistent stoichiometric matrix defined by the existence of
at least one strictly positive vector $\mathbf{m}\in\mathbb{R}_{>0}^{m}$,
such that $\mathbf{S}^{T}\cdot\mathbf{m}=\mathbf{0}$. Assuming a
strictly positive initial concentration $\mathbf{x}_{0}\equiv\mathbf{x}(0)\in\mathbb{R}_{>0}^{m}$,
then there exists at least one non-negative steady state concentration
$\mathbf{x}_{\ge0}^{\star}$, such that $\dot{\mathbf{x}}=\mathbf{0}$.\end{thm}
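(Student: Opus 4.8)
The plan is to deduce existence of a steady state from the Brouwer fixed point theorem, exactly as foreshadowed by Figure~\ref{fig:Conceptual-illustration-of}. The key observation is that stoichiometric consistency gives a conserved mass functional: since $\mathbf{S}^{T}\mathbf{m}=\mathbf{0}$ with $\mathbf{m}\in\mathbb{R}_{>0}^{m}$, along any trajectory of \eqref{eq:elementaryKinetics} we have $\frac{d}{dt}(\mathbf{m}^{T}\mathbf{x})=\mathbf{m}^{T}\dot{\mathbf{x}}=\mathbf{m}^{T}\mathbf{S}(\cdots)=\mathbf{0}$, so the total mass $M\equiv\mathbf{m}^{T}\mathbf{x}_{0}$ is an invariant of the motion. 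Combined with concentration non-negativity (from Section~\ref{sub:Stoichiometry}\,--\,the system is essentially non-negative, so $\mathbf{x}_{0}>0$ forces $\mathbf{x}(t)\geq0$ for all $t\geq0$), every trajectory is confined to the set $\Omega\equiv\{\mathbf{x}\in\mathbb{R}_{\geq0}^{m}\mid \mathbf{m}^{T}\mathbf{x}=M\}$, which is a closed, bounded, convex simplex-like slice and hence compact and convex.

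First I would make the flow well-defined on $\Omega$: the right-hand side of \eqref{eq:elementaryKinetics} involves $\ln(\mathbf{x})$, so I need the vector field to extend continuously to the boundary where some $x_i=0$. This is fine because $\exp(\mathbf{F}^{T}\ln\mathbf{x})$ is just the monomial vector with entries $\prod_i x_i^{F_{ij}}$, which is a polynomial in $\mathbf{x}$ and extends continuously (indeed smoothly) to all of $\mathbb{R}_{\geq0}^{m}$; likewise for $\mathbf{R}$. So define $\mathbf{g}(\mathbf{x})\equiv\mathbf{S}(\mathbf{K}_f\,\mathbf{x}^{\mathbf{F}}-\mathbf{K}_r\,\mathbf{x}^{\mathbf{R}})$ on $\Omega$, a continuous (locally Lipschitz) vector field tangent to $\Omega$ and pointing inward on $\partial\Omega$ by essential non-negativity. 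Next I would invoke the standard consequence of Brouwer's theorem for flows on a compact convex set: the time-$\tau$ map $\Phi_\tau:\Omega\to\Omega$ of the flow is continuous and maps the compact convex set $\Omega$ into itself, so for each fixed $\tau>0$ it has a fixed point $\mathbf{x}_\tau$. Taking a sequence $\tau_k\downarrow0$ and extracting a convergent subsequence $\mathbf{x}_{\tau_k}\to\mathbf{x}^{\star}\in\Omega$, a routine argument (using $\frac{1}{\tau}(\Phi_\tau(\mathbf{x})-\mathbf{x})\to\mathbf{g}(\mathbf{x})$ uniformly on the compact set) yields $\mathbf{g}(\mathbf{x}^{\star})=\mathbf{0}$, i.e. $\dot{\mathbf{x}}=\mathbf{0}$ at $\mathbf{x}^{\star}$. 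Alternatively, and perhaps more cleanly, one can apply Brouwer directly: a continuous flow on a compact convex set that has no equilibrium would contradict Brouwer, since one can build a fixed-point-free self-map by following the flow for a small time chosen pointwise. I would present whichever version is shortest.

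The main obstacle I expect is the boundary behaviour and the tangency of the vector field to $\Omega$. Two things must be checked carefully: (i) that $\mathbf{g}$ is genuinely tangent to the affine hyperplane $\{\mathbf{m}^{T}\mathbf{x}=M\}$, which is immediate from $\mathbf{m}^{T}\mathbf{S}=\mathbf{0}$; and (ii) that the flow does not exit through the nonnegativity faces $\{x_i=0\}$, which is exactly the content of essential non-negativity (recalled in Section~\ref{sub:Stoichiometry}): on the face $x_i=0$ all consumption terms for molecule $i$ vanish because they carry a factor $x_i^{F_{ij}}$ with $F_{ij}\geq1$, so $g_i(\mathbf{x})\geq0$ there, meaning the vector field points into $\Omega$ (or along its boundary). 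Together these guarantee $\Phi_\tau(\Omega)\subseteq\Omega$, which is all Brouwer needs. A secondary technical point is boundedness of $\Omega$: since $\mathbf{m}>0$, writing $m_{\min}=\min_i m_i>0$ gives $m_{\min}\sum_i x_i\leq \mathbf{m}^{T}\mathbf{x}=M$, so $\Omega$ is bounded; closedness and convexity are clear. Everything else\,--\,existence and uniqueness of solutions to the ODE on $\Omega$, continuity of $\Phi_\tau$ in the initial condition\,--\,follows from the stated local Lipschitz property of the kinetics (Lemma~1 of \citep{chellaboina2009modeling}) together with the a priori bound from the invariant set.
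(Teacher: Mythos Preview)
Your proposal is correct and follows essentially the same route as the paper: define the compact convex set $\Omega=\{\mathbf{x}\geq 0:\ \mathbf{m}^{T}\mathbf{x}=\mathbf{m}^{T}\mathbf{x}_{0}\}$, verify (via mass conservation and essential non-negativity) that the dynamics keep $\Omega$ invariant, and apply Brouwer's fixed point theorem. The one difference is the choice of self-map on $\Omega$: the paper applies Brouwer directly to the Euler-type map $\mathbf{f}(\mathbf{x})=\mathbf{x}+\dot{\mathbf{x}}$ (described there as ``the concentration after an arbitrary small time interval $\tau$''), whereas you apply it to the genuine time-$\tau$ flow $\Phi_{\tau}$ and then let $\tau\downarrow 0$ along a convergent subsequence of fixed points to extract an equilibrium. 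Your version is in fact the more careful of the two: a literal Euler step $\mathbf{x}+\dot{\mathbf{x}}$ need not remain in $\Omega$ (essential non-negativity guarantees forward invariance for the flow, not for a unit-step discretisation), and a fixed point of $\Phi_{\tau}$ for a single fixed $\tau$ is a priori only $\tau$-periodic rather than stationary---so your subsequence/limit step is precisely what is needed to close both of these gaps.
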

\begin{proof}
We define the function $\mathbf{f}(\mathbf{x})\,:\,\mathbb{R}^{m}\rightarrow\mathbb{R}^{m}$
\begin{eqnarray}
\mathbf{f}(\mathbf{x}) & = & \mathbf{x}+\dot{\mathbf{x}}\label{eq:functionForfixedPoint}
\end{eqnarray}
 where $\mathbf{f}(\mathbf{x})\equiv\mathbf{x}(t+\tau)$ represents
the concentration after an arbitrary small time interval $\tau>0$.
If it exists, a \emph{fixed point} $\mathbf{x}^{\star}$, such that
$\mathbf{f}(\mathbf{x}^{\star})=\mathbf{x}^{\star}$, corresponds
to a steady state concentration, $\mathbf{x}^{\star}(t)=\mathbf{x}^{\star}(t+\tau),$
or equivalently $\dot{\mathbf{x}}=0$. Observe that $f(\mathbf{x})$
is continuous as $\dot{\mathbf{x}}$ is given by a continuous function.
Let us define the closed, bounded and convex set
\[
\Omega=\left\{ \mathbf{x}\geq0,\;\mathbf{m}^{T}\cdot\mathbf{x}=\mathbf{m}^{T}\cdot\mathbf{x}_{0}>0\right\} ,
\]
 as the domain of $\mathbf{f}(\mathbf{x})$. For a strictly positive
initial concentration vector, then elementary kinetics, continuity
of $\mathbf{f}(\mathbf{x})$ and non-negative kinetic parameters are
sufficient conditions to ensure that all subsequent concentrations
are non-negative, $\mathbf{f}(\mathbf{x})\geq0$ (Theorem 2, \citep{chellaboina2009modeling}).
Since $\mathbf{S}\in\mathbb{R}^{m,n}$ is a stoichiometrically consistent
matrix there exists an $\mathbf{m}\in\mathbb{R}_{>0}^{m}$ such that
$\mathbf{m}^{T}\cdot\mathbf{S}=0$ and therefore $\mathbf{m}^{T}\cdot\dot{\mathbf{x}}=0$.
It follows that $\mathbf{m}^{T}\cdot\mathbf{f}(\mathbf{x})=\mathbf{m}^{T}\cdot\mathbf{x}_{0}$
for all $\mathbf{x}\in\Omega$. This together with the non-negativity
of $\mathbf{f}(\mathbf{x})$ establishes that $\mathbf{f}(\mathbf{x})\in\Omega$.
We have now established that $\mathbf{f}(\mathbf{x})$ is a continuous
mapping from a closed, bounded and convex set into itself. By Brouwer's
fixed point theorem there exists at least one fixed point $\mathbf{f}(\mathbf{x}^{\star})=\mathbf{x}^{\star}$
and therefore there exists at least one steady state. 
\end{proof}
In the proof of existence of steady states for mass conserved elementary
kinetics, we make use of the following theorem that we state without
proof. 
\begin{thm}
(Brouwer fixed point theorem). Let $\Omega$ be a closed, bounded
and convex set in $\mathbb{R}^{m}$, and let $\Phi:\mathbb{R}^{m}\rightarrow\mathbb{R}^{m}$
be a function that is continuous on $\Omega$ and maps $\Omega$ into
itself. Then there exists a point $\mathbf{x}\in\Omega$ such that
$\Phi(\mathbf{x})=\mathbf{x}$. 
\end{thm}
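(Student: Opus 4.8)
The plan is to prove Brouwer's theorem by the elementary combinatorial route through Sperner's Lemma, after a topological reduction to a standard simplex; the argument then splits into a combinatorial core and an analytic limit. Restricting attention to the affine hull of $\Omega$, say of dimension $n$, any nonempty compact convex set is homeomorphic to the standard $n$-simplex $\Delta^{n}=\{\mathbf{y}\in\mathbb{R}_{\ge0}^{n+1}:\sum_{i}y_{i}=1\}$; let $h:\Omega\to\Delta^{n}$ denote such a homeomorphism. Given the continuous self-map $\Phi:\Omega\to\Omega$, the conjugated map $g\equiv h\circ\Phi\circ h^{-1}:\Delta^{n}\to\Delta^{n}$ is continuous, and any fixed point $g(\mathbf{y}^{\star})=\mathbf{y}^{\star}$ immediately yields the fixed point $\Phi(h^{-1}(\mathbf{y}^{\star}))=h^{-1}(\mathbf{y}^{\star})$ of $\Phi$. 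It therefore suffices to prove the theorem on $\Delta^{n}$.

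The combinatorial core is \emph{Sperner's Lemma}: in any simplicial subdivision of $\Delta^{n}$ whose vertices are labeled from $\{0,\dots,n\}$ so that a vertex lying on the face spanned by $\{e_{i}:i\in J\}$ receives a label in $J$, the number of fully labeled (\emph{rainbow}) sub-simplices is odd, hence at least one exists. To invoke it I would assign to each point $\mathbf{y}\in\Delta^{n}$ a label $\ell(\mathbf{y})$ drawn from those indices $i$ with $y_{i}>0$ and $g(\mathbf{y})_{i}\le y_{i}$. Such an index always exists, for if every positive coordinate strictly increased under $g$ then $\sum_{i:y_{i}>0}g(\mathbf{y})_{i}>\sum_{i:y_{i}>0}y_{i}=1$, contradicting $\sum_{i}g(\mathbf{y})_{i}=1$. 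Because $y_{i}>0$ forces $i$ to index a coordinate not vanishing on the relevant face, this rule satisfies the Sperner boundary condition automatically.

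The final movement is a limiting argument. Taking a sequence of barycentric subdivisions of $\Delta^{n}$ whose mesh tends to zero, Sperner's Lemma furnishes a rainbow sub-simplex in each. By compactness of $\Delta^{n}$ the vertices of these shrinking rainbow simplices sub-converge to a common point $\mathbf{y}^{\star}$, and since each label $k$ is realized at a vertex where the $k$th coordinate of $g$ does not exceed the $k$th coordinate of the point, continuity of $g$ gives $g(\mathbf{y}^{\star})_{k}\le y^{\star}_{k}$ for every $k=0,\dots,n$. Summing over $k$ converts the componentwise inequalities into equalities, because both sides total one, whence $g(\mathbf{y}^{\star})=\mathbf{y}^{\star}$; pulling back through $h$ then completes the proof.

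The step I expect to be the main obstacle is Sperner's Lemma itself, the only genuinely nontrivial ingredient, which is established by induction on dimension together with a parity count of the labeled facets of the subdivision; by contrast the reduction to $\Delta^{n}$, the construction of the labeling, and the compactness extraction are all routine once the lemma is in hand, the sole extra subtlety being the construction of $h$ in the degenerate, lower-dimensional cases. An alternative analytic route would replace the combinatorics by a smooth approximation of $\Phi$ together with the non-existence of a smooth retraction of the ball onto its boundary, but that shifts rather than removes the central difficulty.
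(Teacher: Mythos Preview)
Your argument is the classical Sperner-lemma route to Brouwer's theorem and is correct as outlined: the reduction to the standard simplex via a homeomorphism with the compact convex set, the Sperner labeling $\ell(\mathbf{y})\in\{i:y_i>0,\ g(\mathbf{y})_i\le y_i\}$, and the compactness-plus-continuity extraction of a fixed point are all standard and sound. The only places requiring care are exactly the ones you flag, namely the parity proof of Sperner's Lemma and the construction of the homeomorphism in degenerate dimensions.

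By contrast, the paper does not prove this theorem at all: it is stated explicitly \emph{without proof} and the reader is referred to the fixed-point literature \citep{granas2003fixed}, with only a one-dimensional intuitive picture (Figure~\ref{fig:Conceptual-illustration-of}) offered as motivation. So your proposal is not a different route to the paper's proof but rather a full proof where the paper supplies none; it goes well beyond what the authors intend, since for their purposes Brouwer's theorem is a black-box tool invoked in the proof of Theorem~\ref{thm:existenceSS}.
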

There is a voluminous literature on fixed point theory \citep{granas2003fixed}.
Figure \ref{fig:Conceptual-illustration-of} illustrates an intuitive
appreciation for the rationale behind Brouwer's fixed point theorem
by considering a one dimensional case, that is, a function mapping
of an interval on a line \emph{into} itself.

\section{Utility of steady state existence theorem}

\label{sec:UtilitySSexistence}

Theorem \ref{thm:existenceSS} is non-constructive in the sense that
it does not describe an algorithm for computation of steady state
concentrations. In the case where one is modeling a system of exclusively
mass conserved reversible reactions with mass action kinetics, it
has long been known that a unique steady state concentration can be
computed with a single convex optimization problem \citep{white1958chemical}.
Such a steady state corresponds to a thermodynamic equilibrium where
detailed balance holds. The development of a reliable algorithm to
compute non-equilibrium steady states for arbitrary large networks
is an important open problem. In the process of algorithm development,
it is essential to know, \emph{a priori}, if at least one steady state
exists. Otherwise it becomes impossible to distinguish if a failure
to compute a steady state is due to a shortcoming of an algorithm's
design, or due to to an ill-posed problem without a solution in the
first instance.

The key difference between an equilibrium and non-equilibrium steady
state is that the latter is accompanied by thermodynamic forcing of
the system by the environment. In chemical reaction networks, time
invariant thermodynamic forcing has been mathematically represented
by clamping a subset of concentrations away from equilibrium or injecting
mass across the boundary of the model \citep{QB05}. However, for
kinetic modeling of time invariant concentrations, any formulation
of a forced system must be compatible with the existence of at least
one steady state concentration vector. It is therefore important to
establish, if possible, the conditions for existence of at least one
steady state for each formulation. Some formulations are actually
incompatible with the existence of a steady state.

\subsection{System forcing where a steady state may not exist}

One approach to forcing a system is to represent the exchange of molecules
between a system and its environment with a set of mass imbalanced
source or sink reactions, respectively $\emptyset\rightarrow A$ and
$A\rightarrow\emptyset$, where $A$ is an arbitrary molecule. Let
$\mathbf{S}_{e}\in\mathbb{Z}^{m,k}$ denote the stoichiometry of mass
imbalanced exchange reactions. To the author's knowledge, for networks
with bilinear reactions, no theorem exists that defines the conditions
on the data $\{\mathbf{S},\mathbf{S}_{e},\mathbf{k}_{f},\mathbf{k}_{r}\}$
such that there still exists at least one non-equilibrium steady state.
As described in Section \ref{sub:Stoichiometry}, an augmented stoichiometric
matrix, $\bar{\mathbf{S}}=[\begin{array}{cc}
\mathbf{S} & \mathbf{S}_{e}\end{array}]$, containing mass imbalanced exchange reactions will not be stoichiometrically
consistent, so Theorem \ref{thm:existenceSS} does not apply.

Another approach to forcing a system, in which all reactions are mass
balanced, is to attempt to iterate toward a steady state of the forced
dynamical system
\begin{equation}
\dot{\mathbf{x}}\equiv\frac{d\mathbf{x}}{dt}=\mathbf{S}\cdot(\mathbf{v}_{f}(\mathbf{k}_{f},\mathbf{x})-\mathbf{v}_{r}(\mathbf{k}_{r},\mathbf{x}))-\mathbf{b},\label{eq:bForcing}
\end{equation}
 where $\mathbf{b}$ is a concentration invariant forcing vector in
the range of the stoichiometric matrix, $\mathbf{b}\in\mathcal{R}(\mathbf{S})$.
If one chooses a $\mathbf{b}^{\star}\in\mathcal{R}(\mathbf{S})$ such
that 
\begin{equation}
\mathbf{S}\cdot(\mathbf{v}_{f}(\mathbf{k}_{f},\mathbf{x}^{\star})-\mathbf{v}_{r}(\mathbf{k}_{r},\mathbf{x}^{\star}))=\mathbf{b}^{\star}\label{eq:steadyState_bStar}
\end{equation}
is satisfiable, then this would correspond to forcing in a manner
independent of molecule concentration. Given $\mathbf{x}^{\star}$
it is trivial to compute $\mathbf{b}^{\star}$ but not the other way
around. If we assume that unidirectional reaction rates are as defined
in \eqref{eq:massActionKinetics} and \eqref{eq:thermoFeasibleParam},
then by rearrangement, one may express \eqref{eq:massConservedElementaryKinetics}
as
\[
\dot{\mathbf{x}}=[\begin{array}{cc}
\mathbf{S} & -\mathbf{S}]\end{array}\cdot\textrm{diag}(\left[\begin{array}{c}
\mathbf{k}_{f}\\
\mathbf{k}_{r}
\end{array}\right])\cdot\exp([\begin{array}{cc}
\mathbf{F} & \mathbf{R}\end{array}]^{T}\cdot\ln\left(\mathbf{x}\right)),
\]
where $[\begin{array}{cc}
\mathbf{F} & \mathbf{R}\end{array}]\in\mathbb{Z}^{m,2n}$. Typically $m<n$ and $\textrm{rank}([\begin{array}{cc}
\mathbf{F} & \mathbf{R}\end{array}])<n$, so the image of $[\begin{array}{cc}
\mathbf{F} & \mathbf{R}\end{array}]^{T}\cdot\ln\left(\mathbf{x}\right)$ is not the whole of $\mathbb{R}^{2n}$ and therefore the set of all
$\dot{\mathbf{x}}$ is a subset of the range of the stoichiometric
matrix. The set of $\mathbf{b}^{\star}$ such that \eqref{eq:steadyState_bStar}
is satisfiable is only a subset of the range of the stoichiometric
matrix, so a steady state may not necessarily exist for an arbitrary
$\mathbf{b}$. Attempting to force a system with \eqref{eq:bForcing}
leaves one with the problem of attempting an \emph{a priori} choice
of concentration invariant forcing vector that may not admit a steady
sate concentration.

\subsection{System forcing where a steady state must exist}

Theorem \ref{thm:existenceSS} is constructive in the sense that it
leads to a method to force a system in a manner that ensures there
always exists at least one non-equilibrium steady state concentration
vector. The key point is to recognise that Theorem \ref{thm:existenceSS}
holds for any choice of non-negative kinetic parameters. Additional
thermodynamic constraints on kinetic parameters \eqref{eq:thermoFeasibleParam}
are optional on a per reaction basis. If all reversible reactions
have thermodynamically feasible kinetic parameters, then the only
steady state is thermodynamic equilibrium, but if at least one reversible
reaction is modeled with thermodynamically infeasible kinetic parameters,
that violate \eqref{eq:thermoFeasibleParam}, then detailed balance
does not hold but there always exists at least one non-equilibrium
steady state.

It is not physicochemically realistic to model actual chemical reactions
with thermodynamically infeasible kinetic parameters for any form
of kinetic rate law \citep{cook2007eka}. However, one may include
a mass balanced, reversible \emph{perpetireaction} with thermodynamically
infeasible kinetic parameters, purely for the modeling purpose of
forcing a system away from equilibrium. (The prefix \emph{perpeti}
is from the latin \emph{perpes} meaning lasting throughout, continuous,
uninterrupted, continual, perpetual; see www.perseus.tufts.edu.).
The augmentation of a consistent stoichiometric matrix with a mass
balanced perpetireaction still retains the stoichiometric consistency
of the augmented matrix. Assuming that elementary reaction kinetics
is used to model each unidirectional reaction there will still exist
a steady state. We now illustrate one choice of perpetireaction by
considering a biochemical example.

\subsubsection{\label{sub:Tbrucei}Mass conserved elementary kinetics of \emph{Trypanosoma
brucei}}

The utility of Theorem \ref{thm:existenceSS} can be illustrated by
considering a typical kinetic modeling scenario, such as the modeling
of anaerobic glycolysis in the African trypanosome, \emph{Trypanosoma
brucei}, the causative agent of human African trypanosomiasis \citep{Barrett2010,Bakker2010}.
Based on a phenomenological kinetic model of \emph{T. brucei} glycolysis
\citep{Bakker1997} the stoichiometry of anaerobic glycolysis may
be represented in skeleton form by the composite chemical reactions
in Figure \ref{fig:The-anaerobic-glycolysis}. Modeling composite
reactions with elementary kinetic rate laws is \emph{pseudoelementary
kinetics}, but for the purpose of illustrating the utility of Theorem
\ref{thm:existenceSS}, this distinction is superfluous. Starting
with extracellular glucose the overall stoichiometry of this pathway
may be given by the mass balanced composite reaction 
\begin{equation}
\textrm{glucose }\rightleftharpoons\textrm{glycerol + pyruvate }+H^{+}.\label{eq:TrypGlycAner}
\end{equation}
This composite reaction may be used as a perpetireaction that, in
reverse, connects the outputs of anaerobic glycolysis back to the
glucose input. This perpetireaction is the TrypGlycAner reaction in
Figure \ref{fig:The-anaerobic-glycolysis}. As perpetireaction kinetic
parameters violate \eqref{eq:thermoFeasibleParam}, no equilibrium
steady state exists as detailed balance \citep{tolman1979psm} is
violated, but there exists at least one non-equilibrium steady state
for the augmented system. Such a non-equilibrium steady state conserves
mass in all reactions, and all except the perpetireaction conserve
energy.
\begin{figure}
\includegraphics[width=0.45\textwidth]{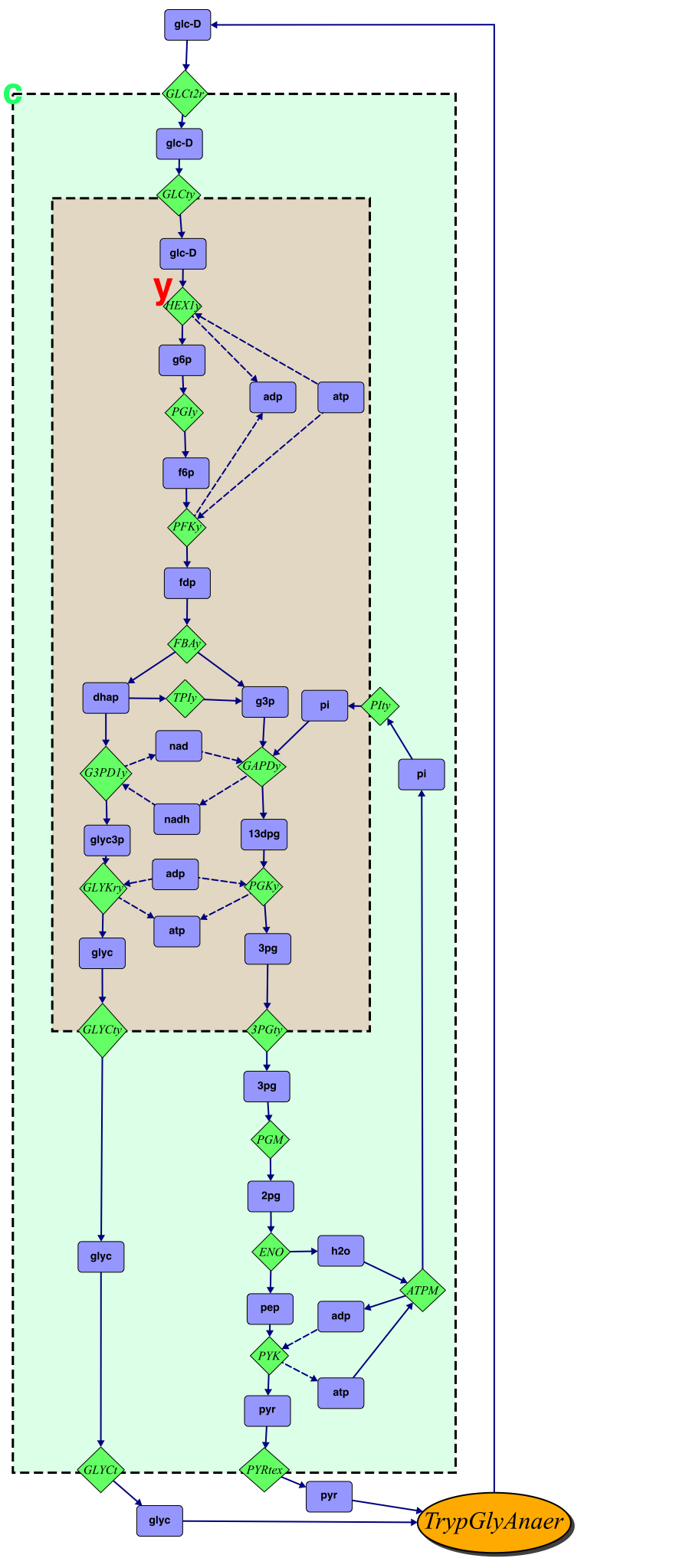}

\caption{\label{fig:The-anaerobic-glycolysis}The anaerobic glycolysis pathway
of \emph{Trypanosoma brucei}, part of which is in the cytoplasm (c)
and part of which is within a membrane-bounded, peroxisome-like organelle
termed a glycosome (y). The input to this pathway is glucose (glc-D)
and the outputs are glycerol (glyc), pyruvate (pyr) and hydrogen ion
(not shown for clarity). For modeling purposes, one can construct
a perpetual reaction, or \emph{perpetireaction} (TrypGlycAner), from
extracellular output metabolites to extracellular glucose input to
form, with the anaerobic glycolysis pathway, a stoichiometrically
balanced cycle. With appropriate choice of kinetic parameters, a non-equilibrium
steady state concentration corresponds to net flux in the directions
indicated by the arrows. Dashed arrows indicate the involvement of
cofactors. (Illustration created with \emph{Omix} \citep{Droste2011}.)}
\end{figure}

Assuming constant temperature and pressure, and uniform spatial concentrations
within a single compartment, the existence of a single chemical potential
for each (compartment specific) molecule is a necessary and sufficient
condition for conservation of energy \citep{planck1945tt,minty1960mn,ross2008taf}.
The violation of \eqref{eq:thermoFeasibleParam} by the pair of forward
and reverse elementary kinetic parameters, $p_{f}$ and $p_{r}$,
for the perpetireaction means that there exists no single standard
chemical potential for each molecule and hence no single chemical
potential for each molecule. With reference to the \emph{T. brucei}
example, one or more of glucose, glycerol, pyruvate or hydrogen ion
can not be assigned a unique chemical potential. This is equivalent
to the statement that the stoichiometrically weighted sum of chemical
potential around the single stoichiometrically balanced cycle formed
by the anaerobic glycolysis pathway and the perpetireaction is not
zero \citep{beard2002eba,fleming2008stk}. 

One can also think of the perpetireaction as a chemical reaction that
extracts energy, but not molecular moieties, from an infinitely large
source in the environment. In a non-equilibrium steady state, the
amount of energy extracted from the environment per unit time by the
perpetireaction is equal to the entropy production rate of all the
other reactions. In analogy with electrical networks, at a non-equilibrium
steady state, a perpetireaction acts like a direct current voltage
source. Indeed, in the representation of electrical networks, even
the most elementary circuit diagram forms a closed cycle with some
voltage source in the loop to drive electrons around the circuit.
In numerically calculated steady states the \emph{T. brucei} model,
if all thermodynamically feasible kinetic parameters are given a unit
value and $p_{f}>p_{r}$ then the net steady state flux of anaerobic
glycolysis proceeds in the usual direction (Figure \ref{fig:The-anaerobic-glycolysis}).
However, if $p_{f}<p_{r}$ then anaerobic glycolysis proceeds in the
reverse direction. In an electrical network analogy, this switch is
equivalent to reversing the polarity of a direct current voltage source.

\subsubsection{More general examples}

In Section \ref{sub:Tbrucei}, we considered an example where the
overall system being modeled consisted of a single stoichiometrically
balanced pathway. More precisely a single extreme ray of an augmented
stoichiometric matrix
\[
\overline{\mathbf{S}}=[\begin{array}{cc}
\mathbf{S} & \mathbf{S}_{p}\end{array}]
\]
where $\mathbf{S}_{p}$ is a column vector with reaction stoichiometry
for a single composite perpetireaction, with net forward direction
right to left in \ref{eq:TrypGlycAner}. In the general case of an
arbitrary mass balanced stoichiometric matrix $\mathbf{S}\in\mathbb{R}^{m,n}$,
then $\mathbf{S}_{p}\in\mathbb{R}^{m,k}$ can be an arbitrary set
of $k$ column vectors, each of which could be a linear basis vector
for the range of $\mathbf{S}$, accompanied by $k$ perpetireactions.
In fact, any $\mathbf{b}\in\mathcal{R}(\mathbf{S})$ can be used to
create an augmented stoichiometric matrix 
\[
\overline{\mathbf{S}}=[\begin{array}{cc}
\mathbf{S} & -\mathbf{b}\end{array}]
\]
that is still also consistent. Note however that this is not equivalent
to forcing a system like \ref{eq:bForcing} as the steady state $\mathbf{x}^{\star}$,
that we know must exist, satisfies
\[
0=[\begin{array}{cc}
\mathbf{S} & -\mathbf{S}]\end{array}\cdot\textrm{diag}(\left[\begin{array}{c}
\mathbf{k}_{f}\\
\mathbf{k}_{r}
\end{array}\right])\cdot\exp([\begin{array}{cc}
\mathbf{F} & \mathbf{R}\end{array}]^{T}\cdot\ln\left(\mathbf{x}^{\star}\right))-[\begin{array}{cc}
\mathbf{b} & -\mathbf{b}\end{array}]\cdot\textrm{diag}(\left[\begin{array}{c}
p_{f}\\
p_{r}
\end{array}\right])\cdot\exp([\begin{array}{cc}
\mathbf{b}_{f} & \mathbf{b}_{r}\end{array}]^{T}\cdot\ln\left(\mathbf{x}^{\star}\right)),
\]
where $p_{f},p_{r}>0$ are perpetireaction parameters, with $\mathbf{b}_{f}\equiv\max(-\mathbf{b},0)$
and $\mathbf{b}_{r}\equiv\max(\mathbf{b},0)$ defined in an analogous
manner to $\mathbf{F}$ and $\mathbf{R}$. Even more general is the
consideration of continuous kinetic rate laws that guarantee concentration
non-negativity. The same strategy to augment the system with perpetireactions,
yet retain stoichiometric consistency, will yield a forced system
where there always exists a steady state concentration vector.

\section{Discussion}

In the present work, Theorem \ref{thm:existenceSS} gives sufficient
conditions for the existence of at least one non-negative steady state
concentration vector, assuming elementary reaction kinetics for a
set of mass balanced chemical reactions. All kinetic parameters are
required to be positive but do not have to satisfy thermodynamic constraints
\eqref{eq:thermoFeasibleParam} on the ratio of forward over reverse
elementary kinetic parameter. Actual biochemical reactions are modeled
with reactions that have thermodynamically feasible kinetic parameters.
In order to conserve mass, yet admit a non-equilibrium steady state,
one may augment the set of thermodynamically feasible reactions with
one or more \textit{perpetireactions} (perpetual reactions), defined
as reactions with thermodynamically infeasible kinetic parameters.
This gives the flexibility to model the non-equilibrium dynamics of
a system closed to exchange of mass with the environment yet not isolated
with respect to the exchange of energy with the environment. In a
non-equilibrium steady state, the net input of chemical energy is
the driving force for net flux through the stoichiometrically balanced
system. 

Theorem \ref{thm:existenceSS} does not preclude that a subset of
molecule concentrations are actually zero at a steady state. There
may exist more than one steady state and no conclusion can be drawn
as to the stability or otherwise of the steady states that exist.
Theorem \ref{thm:existenceSS} is non-constructive in that it does
not provide an algorithm to compute a non-equilibrium steady state.
However, Theorem \ref{thm:existenceSS} makes use of Brouwer's fixed
point theorem so, assuming the conditions required for Theorem \ref{thm:existenceSS}
to hold, it may be possible to apply related constructive fixed point
theorems \citep{granas2003fixed} to design an algorithm that is guaranteed
to converge to a non-equilibrium steady state. Contributions from
fixed point theorists are encouraged and this is part of the reason
for the detail in Section \ref{sub:Stoichiometry}.

The sufficient conditions for existence of a non-equilibrium steady
state are easily tested numerically for arbitrary large chemical networks
\citep{thiele2009gcr}. As described in an elegant paper by \citet{gevorgyan2008detection}
the stoichiometric consistency of a metabolic network can be proved
or disproved by attempting the linear optimization problem
\begin{align}
\underset{\textrm{\textbf{m}}}{\textrm{minimize}}\qquad & \mathbf{e}^{T}\cdot\mathbf{m}\label{eq:consistencyTest1}\\
\textrm{such that}\qquad & \mathbf{S}^{T}\cdot\mathbf{m}=\mathbf{0}\label{eq:consistencyTest2}\\
 & \mathbf{m}>\mathbf{0}\label{eq:consistencyTest3}
\end{align}
where $\mathbf{e}$ denotes a vector of ones and $\mathbf{S}\in\mathbb{Z}^{m,n}$
is a stoichiometric matrix. If there exists an $\mathbf{m}$ satisfying
\eqref{eq:consistencyTest2} and \eqref{eq:consistencyTest3}, then
$\mathbf{S}$ is stoichiometrically consistent, otherwise a suitable
solver will provide a certificate of infeasibility indicating that
$\mathbf{S}$ is inconsistent. Alternatively, if one has rigorously
applied mass balancing for each chemical reaction whilst reconstructing
a network \citep{thieleTests,Thorleifsson2011}, one will be able
to assign a positive molecular mass corresponding to each of the molecules
in the reconstruction. This strictly positive molecular mass vector
satisfies \eqref{eq:consistencyTest2} and \eqref{eq:consistencyTest3},
which is sufficient to conclude that the corresponding stoichiometric
matrix is consistent.

The only other condition for Theorem \ref{thm:existenceSS} to hold
is that continuous kinetic rate laws be formulated in a manner such
that the concentration of any molecule can never be negative \citep{chellaboina2009modeling}.
In this paper, we have framed Theorem \ref{thm:existenceSS} in terms
of elementary reaction kinetics that satisfy concentration non-negativity
if the elementary kinetic parameters are non-negative. However, one
can envisage a more general version of Theorem \ref{thm:existenceSS}
as there are many other continuous kinetic rate laws that satisfy
concentration non-negativity \citep{liebermeister2010modular}. Any
continuous kinetic rate law, where the rate of a unidirectional reaction
is non-negative and zero if and only if any of the concentrations
of the molecules consumed in that reaction are zero \citep{chellaboina2009modeling},
would form the conditions for a generalised version of Theorem \ref{thm:existenceSS}.
In any case, any phenomenological kinetic rate law can be derived from
assumptions that allow simplification of a system of elementary chemical
reactions \citep{cook2007eka}. As such, phenomenological kinetic
modeling has its foundation in mass action kinetics.

\section{Conclusion}

It is 50 years since Wei's Axiom's on the existence of steady states
for chemical reaction systems\citep{wei1962axiomatic} and almost
40 years since Horn \& Jackson \citep{horn1972gma} considered what
they termed \emph{general mass action kinetics}. In Horn \& Jackson's
setting, elementary reaction rates are proportional to the abundance
of the substrates involved in the reaction, each to the power of the
absolute value of the corresponding stoichiometric coefficient. However,\emph{
general mass action kinetics} considers systems where kinetic parameters
need not be thermodynamically feasible, stoichiometric coefficients
need not be integers, and mass need not be conserved by each reaction.
With regard to modeling chemical reaction networks, we agree that
consideration of reactions with thermodynamically infeasible kinetic
parameters does seem profitable for representing the perpetual forcing
of a system, purely for modeling purposes. However, violation of mass
conservation appears unnecessary, as a pair of thermodynamically infeasible
kinetic parameters are sufficient to force a system away from equilibrium,
and counterproductive, as the resulting system may not admit a non-equilibrium
steady state. Horn \& Jackson \citep{horn1972gma} do realize that
the conditions for Wei's existence result are unmet when mass is not
conserved. We conclude that, rather than \emph{general mass action
kinetics,} assuming \emph{mass conserved elementary kinetics} is sufficient
for modeling non-equilibrium steady states in arbitrary large biochemical
networks, as one is then sure that at least one steady state does
actually exist. Similar conclusions hold for phenomenological kinetic
modeling, .e.g., with Michaelis-Menten kinetics, as long as the continuos
rate laws are such that concentration can never be negative.

\section*{Acknowledgements}

This work was supported by the U.S. Department of Energy (Offices
of Advanced Scientific Computing Research \& Biological and Environmental
Research) as part of the Scientific Discovery Through Advanced Computing
program (Grant No. DE-SC0002009). I.T. was also supported, in part,
by a Marie Curie International Reintegration Grant (No.~249261) within
the 7th European Community Framework Program.

\footnotesize
\frenchspacing

\bibliographystyle{jtb} 

\section*{References}

\end{document}